\def\version{J. Math. Phys. 50, 053509 (2009)}
\def\be{\begin{equation}}
\def\ee{\end{equation}}
\def\ba{\begin{align}}
\def\ea{\end{align}}
\def\bsplit{\begin{split}}
\def\esplit{\end{split}}
\def\bm{\begin{multline}}
\def\eem{\end{mutline}}
\def\bfig{\begin{figure}[htb]}
\def\efig{\end{figure}}
\numberwithin{equation}{section}
\newtheorem{theorem}{Theorem}[section]
\newtheorem{proposition}[theorem]{Proposition}
\newtheorem{lemma}[theorem]{Lemma}
\newtheorem{assumption}{Assumption}
\newcommand{\nn}{\nonumber}
\renewcommand{\leq}{\;\leqslant\;}
\renewcommand{\geq}{\;\geqslant\;}
\newcommand{\dd}{{\rm d}}
\newcommand{\e}[1]{\,{\rm e}^{#1}\,}
\newcommand{\sumtwo}[2]{\sum_{\substack{#1 \\ #2}}}
\def\Tr{{\operatorname{Tr\,}}}
\def\dist{{\operatorname{dist\,}}}
\def\Re{{\operatorname{Re\,}}}
\newcommand{\upchi}{\raise 2pt \hbox{$\chi$}}
\newcommand{\caA}{{\mathcal A}}\newcommand{\caC}{{\mathcal C}}\newcommand{\caF}{{\mathcal F}}\newcommand{\caG}{{\mathcal G}}\newcommand{\caP}{{\mathcal P}}\newcommand{\caS}{{\mathcal S}}\newcommand{\caT}{{\mathcal T}}\newcommand{\caX}{{\mathcal X}}
\newcommand{\bbB}{{\mathbb B}}\newcommand{\bbR}{{\mathbb R}}\newcommand{\bbX}{{\mathbb X}}\newcommand{\bbZ}{{\mathbb Z}}
\begin{document}

{\hfill\small \version} \vspace{2mm}

\title[Abstract cluster expansion with applications]{Abstract cluster expansion with applications to statistical mechanical systems}

\author{Suren Poghosyan}
\address{Suren Poghosyan \hfill\newline
\indent Institute of Mathematics \hfill\newline
\indent Armenian National Academy of Science \hfill\newline
\indent Marshal Bagramian 24-B \hfill\newline
\indent Yerevan, 375019, Armenia
}
\email{suren.poghosyan@unicam.it}

\author{Daniel Ueltschi}
\address{Daniel Ueltschi \hfill\newline
\indent Department of Mathematics \hfill\newline
\indent University of Warwick \hfill\newline
\indent Coventry, CV4 7AL, England \hfill\newline
{\small\rm\indent http://www.ueltschi.org}
}
\email{daniel@ueltschi.org}

\begin{abstract}
We formulate a general setting for the cluster expansion method and we discuss sufficient criteria for its convergence. We apply the results to systems of classical and quantum particles with stable interactions.

\vspace{1mm}
\noindent
{\sc Keywords:} cluster expansion, polymer model, stable interaction, quantum gas.

\vspace{1mm}
\noindent
{\it 2000 Math.\ Subj.\ Class.:} 82B05, 82B10, 82B20, 82B21, 82B26\\
\end{abstract}

\maketitle

\section{Introduction}

The method of cluster expansions was introduced in the 1930's in statistical mechanics in order to study gases of classical interacting particles. Its main achievement, from the point of view of physics, may be the derivation of the van der Waals equation of state for weakly interacting systems. The method was made rigorous by mathematical-physicists in the 1960's, see \cite{Rue} and references therein.

The method split afterwards. One branch involves continuous systems, with applications to classical systems \cite{Pen,MP,BrF}, quantum systems \cite{Gin1,Gin,PZ}, or quantum field theory \cite{GJS,Mal,BaF,BK}. The other branch involves polymer systems, i.e.\ discrete systems with additional internal structure \cite{GK,Dob,BZ,Mir,Sok,FP,JPS}. An important step forward was the article of Koteck\'y and Preiss with its simplified setting and its elegant condition for the convergence of the cluster expansion \cite{KP}.

The methods for proving the convergence are diverse. Let us mention the study of Kirkwood-Salsburg equations that involves correlation functions, see \cite{Rue} and references therein; the algebraic approach of Ruelle \cite{Rue}; combinatorial approaches using tree identities \cite{Pen,BrF,BaF,BK}; inductions for discrete systems \cite{Dob,BZ,Mir}.

Important and useful surveys were written by Brydges \cite{Bry}, Pfister \cite{Pfi}, Abdesselam and Rivasseau \cite{AR}. Recent articles have been devoted to combinatorial aspects \cite{Sok,Far,JPS} and to weakening the assumptions \cite{FP,Far,Pro}.

The method of cluster expansions applies when the objects do not interact much; this the case when they are far apart (low density), or when interactions are weak. An extension of the criterion of \cite{KP} that takes into account these two aspects was proposed in \cite{Uel}; it applies to both discrete and continuous systems.

All abstract (i.e.\ general) approaches involve restrictions that correspond to repulsive interactions. Yet the old results for classical and quantum systems only assume {\it stable} interactions, that may include an attractive part. The aim of the present article is to propose a general approach that applies to discrete and continuous systems with repulsive or stable interactions. Our proof is split into several independent steps and this helps clarify the situation.

The setting and the results are presented in Section \ref{sec clexp}. We consider applications to classical systems of particles in Section \ref{sec class gas}, to polymer systems in Section \ref{sec class polymers}, and to the quantum gas in Section \ref{sec quantum gas}. A fundamental tree estimate is derived in Section \ref{sec tree estimate}, and the theorems of Section \ref{sec clexp} are proved in Section \ref{sec proofs}.

\section{Cluster expansions}
\label{sec clexp}

We consider a set $\bbX$ whose elements may represent widely different objects --- in the three applications considered in this article, an element $x\in\bbX$ represents (i) the position of a classical particle, (ii) a polymer, i.e.\ a connected set of $\bbZ^d$, and (iii) a closed Brownian bridge. For the general abstract theory, we assume the structure of a measure space, $(\bbX,\caX,\mu)$, with $\mu$ a complex measure. We denote $|\mu|$ the total variation (absolute value) of $\mu$. Let $u$ and $\zeta$ be complex measurable symmetric functions on $\bbX \times \bbX$, that are related by the equation
\be
\zeta(x,y) = \e{-u(x,y)}-1.
\ee
We allow the real part of $u$ to take the value $+\infty$, in which case $\zeta(x,y)=-1$. In typical applications $u(x,y)$ represents the interactions between $x$ and $y$, and the value $+\infty$ corresponds to a hard-core repulsion. We define the ``partition function" by
\be
\label{def Z}
Z = \sum_{n\geq0} \frac1{n!} \int\dd\mu(x_1) \dots \int\dd\mu(x_n) \exp\Bigl\{ -\sum_{1\leq i<j\leq n} u(x_i,x_j) \Bigr\},
\ee
or, equivalently,
\be
Z = \sum_{n\geq0} \frac1{n!} \int\dd\mu(x_1) \dots \int\dd\mu(x_n) \prod_{1\leq i<j\leq n} \bigl( 1 + \zeta(x_i,x_j) \bigr).
\ee
The term $n=0$ of the sums is understood to be 1.

The main goal of cluster expansions is to express the partition function as the exponential of a convergent series of ``cluster terms". The main difficulty is to prove the convergence. We first assume that the potential $u$ is stable.

\begin{assumption}
\label{ass stability}
There exists a nonnegative function $b$ on $\bbX$ such that, for all $n$ and almost all $x_1,\dots,x_n \in \bbX$,
\[
\prod_{1\leq i<j \leq n} \bigl| 1 + \zeta(x_i,x_j) \bigr| \leq \prod_{i=1}^n \e{b(x_i)}.
\]
\end{assumption}

In  other words, we assume the lower bound
\be
\sum_{1\leq i<j \leq n} \Re u(x_i,x_j) \geq -\sum_{i=1}^n b(x_i).
\ee
When the function $b$ is constant, this is the usual definition of stability. ``Almost all" means that, for given $n$, the set of points where the condition fails has measure zero with respect to the product measure $\otimes^n \mu$. If $\bbX$ is countable, the condition must be satisfied for all $x_1,\dots,x_n$ such that $\mu(x_i)\neq0$.

The second condition deals with the strength of interactions.

\begin{assumption}
\label{ass KP crit}
There exists a nonnegative function $a$ on $\bbX$ such that for almost all $x \in \bbX$,
\[
\int\dd|\mu|(y) \, |\zeta(x,y)| \e{a(y)+2b(y)} \leq a(x).
\]
\end{assumption}

In order to guess the correct form of $a$, one should consider the left side of the equation above with $a(y) \equiv 0$. The integral may depend on $x$; a typical situation is that $x$ is characterized by a length $\ell(x)$, which is a positive number, so that the left side is roughly proportional to $\ell(x)$. This suggests to try $a(x) = c\ell(x)$, and one can then optimize on the value of $c$.

We also consider an alternate criterion that involves $u$ rather than $\zeta$. It is inspired by the recent work of Procacci \cite{Pro}. Let
\be
\overline u(x,y) = \begin{cases} u(x,y) & \text{if } \Re u(x,y) \neq \infty, \\ 1 & \text{if } \Re u(x,y) = \infty. \end{cases}
\ee

\renewcommand{\theassumption}{2'}
\begin{assumption}
\label{ass tree crit}
There exists a nonnegative function $a$ on $\bbX$ such that for almost all $x \in \bbX$,
\[
\int\dd|\mu|(y) \, |\overline u(x,y)| \e{a(y)+b(y)} \leq a(x).
\]
\end{assumption}

For positive $u$ we can take $b(x) \equiv 0$; and since $1-\e{-u} \leq u$, Assumption \ref{ass KP crit} is always better than Assumption \ref{ass tree crit}. We actually conjecture that, together with Assumption \ref{ass stability}, a sufficient condition is
\be
\int\dd|\mu|(y) \, \min\bigl( |\zeta(x,y)|,|u(x,y)| \bigr) \e{a(y)+b(y)} \leq a(x).
\ee
That is, it should be possible to combine the best of both assumptions. In this respect Assumption \ref{ass KP crit} is optimal in the case of positive potentials, and Assumption \ref{ass tree crit} is optimal in the case of hard core plus negative potentials.

We denote by $\caG_n$ the set of all graphs with $n$ vertices (unoriented, no loops) and $\caC_n \subset \caG_n$ the set of connected graphs with $n$ vertices. We introduce the following combinatorial function on finite sequences $(x_1,\dots,x_n)$ of elements of $\bbX$:
\be
\label{def comb fct}
\varphi(x_1,\dots,x_n) = \begin{cases} 1 & \text{if } n=1, \\ \frac1{n!} \sum_{G \in \caC_n} \prod_{\{i,j\} \in G} \zeta(x_i,x_j) & \text{if } n\geq2. \end{cases}
\ee
The product is over edges of $G$.

\begin{theorem}[Cluster expansions]
\label{thm clexp}
Suppose that Assumptions \ref{ass stability} and \ref{ass KP crit}, or \ref{ass stability} and \ref{ass tree crit}, hold true. We also suppose that $\int\dd|\mu|(y)| \e{a(y)+2b(y)} < \infty$. Then we have
\be
\label{clexp}
Z = \exp\Bigl\{ \sum_{n\geq1} \int\dd\mu(x_1) \dots \dd\mu(x_n) \, \varphi(x_1,\dots,x_n) \Bigr\}.
\ee
The term in the exponential converges absolutely. Furthermore, for almost all $x_1 \in \bbX$, we have the following estimate
\be
\label{the bound}
\sum_{n\geq2} n \int\dd|\mu|(x_2) \dots \int\dd|\mu|(x_n) \, |\varphi(x_1,\dots,x_n)| \leq (\e{a(x_1)}-1) \e{2b(x_1)}.
\ee
\end{theorem}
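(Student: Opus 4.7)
My plan is in three steps: derive the exponential formula formally, bound $|\varphi|$ by the tree estimate of Section \ref{sec tree estimate}, and execute the sum over trees via a generating-function fixed-point. For the first step, expand $\prod_{i<j}(1+\zeta(x_i,x_j))$ in (\ref{def Z}) as a sum over graphs $G\in\caG_n$, decompose each $G$ into its connected components, and sum over the vertex sets of those components. The standard exponential-formula identity then converts the partition-of-vertex-sets sum into the exponential of the single-component contribution, giving (\ref{clexp}) as a formal identity; the rearrangement is legitimate once the exponentiated series is shown to converge absolutely, which will follow from (\ref{the bound}) together with the integrability hypothesis $\int\dd|\mu|(y)\,\e{a(y)+2b(y)}<\infty$.

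\textbf{Step 2 (tree bound for $|\varphi|$).} I invoke the fundamental tree estimate of Section \ref{sec tree estimate}. Under Assumption \ref{ass stability} it yields
\[
|\varphi(x_1,\ldots,x_n)| \leq \frac{1}{n!}\prod_{i=1}^n\e{2b(x_i)}\sum_{T\in\caT_n}\prod_{\{i,j\}\in T}|\zeta(x_i,x_j)|,
\]
where $\caT_n$ denotes the trees on $n$ labelled vertices, with an analogous bound involving $|\overline u|$ and a single $b$ factor per vertex under Assumption \ref{ass tree crit}. Stability is what rescues the argument when $\zeta$ is not of one sign: the non-tree factors in the Penrose / Battle--Brydges--Federbush identity are absorbed into $\prod_i\e{2b(x_i)}$ via the stability bound, rather than by bare positivity.

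\textbf{Step 3 (summing over trees).} Substituting this bound into the left-hand side of (\ref{the bound}) and factoring out $\e{2b(x_1)}$, what remains is the rooted-tree generating function
\[
B(x) := \sum_{n\geq 1}\frac{1}{(n-1)!}\sum_{T\in\caT_n}\int\prod_{i\geq 2}\dd|\mu|(x_i)\,\e{2b(x_i)}\prod_{\{i,j\}\in T}|\zeta(x_i,x_j)|.
\]
Decomposing a tree on $\{1,\ldots,n\}$ according to the subtrees hanging from the root~$1$ yields the exponential recursion $B(x)=\exp\bigl(\int\dd|\mu|(y)\,|\zeta(x,y)|\,\e{2b(y)}B(y)\bigr)$. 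Set $A(x):=\log B(x)$: then $A$ satisfies $A(x)=\int\dd|\mu|(y)\,|\zeta(x,y)|\,\e{2b(y)+A(y)}$, while Assumption \ref{ass KP crit} says that $a$ is a super-solution of the same equation. A monotone iteration starting from $A_0\equiv 0$ gives $A_n\leq a$ inductively and, in the limit, $A\leq a$, hence $B(x)\leq\e{a(x)}$. Therefore the left-hand side of (\ref{the bound}) is bounded by $\e{2b(x_1)}(B(x_1)-1)\leq(\e{a(x_1)}-1)\e{2b(x_1)}$, as desired; the same argument runs under Assumption \ref{ass tree crit} with $|\overline u|$ and $\e{b}$ replacing $|\zeta|$ and $\e{2b}$ inside the fixed point.

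\textbf{Main obstacle.} The crux is the tree estimate itself: producing precisely the stability weights $\e{2b(x_i)}$ per vertex uniformly in the non-tree structure requires a careful Penrose / Battle--Brydges--Federbush analysis, which is the content of Section \ref{sec tree estimate}. Once that inequality is secured, the rooted-tree combinatorics that lead to the exponential fixed-point equation and the monotone comparison with the super-solution $a$ are essentially routine, as is the Fubini-based justification of Step~1 and of the ``almost all $x$'' qualifiers.
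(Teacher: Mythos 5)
Your proposal is correct and follows essentially the same route as the paper: the tree estimate of Proposition \ref{prop tree estimate}, the decomposition of a rooted tree into the subtrees hanging from vertex $1$ giving the exponential recursion, and the comparison with the super-solution $a$ supplied by Assumption \ref{ass KP crit} (the paper implements your monotone iteration as an induction on the vertex-number truncation $K_N$, obtaining $K_N\leq\e{2b}\exp\{\int|\zeta|K_{N-1}\}$), followed by the standard exponential-formula rearrangement justified by the absolute convergence from \eqref{the bound}. The only cosmetic difference is that you truncate the tree sum by depth rather than by number of vertices; both truncations increase monotonically to the full sum, so the arguments coincide.
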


(Under Assumption \ref{ass tree crit}, Eq.\ \eqref{the bound} holds with $\e{b(x_1)}$ instead of $\e{2b(x_1)}$.)

Let us turn to correlation functions. We only consider one-point and two-point correlation functions since these are the most useful and expressions become more transparent. We refer to \cite{Uel} for more general functions. First, we define the {\it unnormalized one-point correlation function} by
\be
Z(x_1) = \sum_{n\geq1} \frac1{(n-1)!} \int\dd\mu(x_2) \dots \int\dd\mu(x_n) \prod_{1\leq i<j\leq n} \bigl( 1 + \zeta(x_i,x_j) \bigr)
\ee
(the term $n=1$ is 1 by definition). And we define the {\it unnormalized two-point correlation function} by
\be
Z(x_1,x_2) = \sum_{n\geq2} \frac1{(n-2)!} \int\dd\mu(x_3) \dots \int\dd\mu(x_n) \prod_{1\leq i<j\leq n} \bigl( 1 + \zeta(x_i,x_j) \bigr)
\ee
(the term $n=2$ is equal to $1+\zeta(x_1,x_2)$). Notice that $Z(x_1)$ can be viewed as a regular partition function, given by Eq.\ \eqref{def Z}, but with the modified measure $(1 + \zeta(x_1,x)) \mu(x)$ instead of $\mu(x)$. The {\it normalized} correlation functions are $Z(x_1)/Z$ and $Z(x_1,x_2)/Z$. As is shown in Theorem \ref{thm correlation functions}, they can be expressed using the ``cluster functions"
\be
\label{def Z1 hat}
\hat Z(x_1) = \sum_{n\geq 1} n \int \dd\mu(x_2) \dots \int\dd\mu(x_n) \, \varphi(x_1,\dots,x_n),
\ee
and
\be
\label{def Z2 hat}
\hat Z(x_1,x_2) = \sum_{n\geq 2} n(n-1) \int \dd\mu(x_3) \dots \int\dd\mu(x_n) \, \varphi(x_1,\dots,x_n).
\ee
Notice that $|\hat Z(x_1)| \leq \e{a(x_1)+2b(x_1)}$ by \eqref{the bound}.

\begin{theorem}[Correlation functions]
\label{thm correlation functions}
Under the same assumptions as in Theorem \ref{thm clexp}, we have
\[
\begin{split}
\frac{Z(x_1)}Z &= \hat Z(x_1), \\
\frac{Z(x_1,x_2)}Z &= \hat Z(x_1) \hat Z(x_2) + \hat Z(x_1,x_2).
\end{split}
\]
\end{theorem}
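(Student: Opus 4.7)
The plan is to derive both identities by the familiar combinatorial reorganization ``sum over all graphs $=$ product over connected components,'' adapted to the presence of one or two marked vertices. Since Theorem \ref{thm clexp} already supplies absolute convergence of every series in sight (together with the bound $|\hat Z(x_1)|\leq \e{a(x_1)+2b(x_1)}$ and the integrability hypothesis $\int\dd|\mu|(y)\,\e{a(y)+2b(y)}<\infty$), the rearrangements below will be legitimate.

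First I would expand
\[
Z(x_1) = \sum_{n\geq1} \frac1{(n-1)!} \int\dd\mu(x_2)\dots\dd\mu(x_n) \sum_{G\in\caG_n} \prod_{\{i,j\}\in G} \zeta(x_i,x_j),
\]
and for each graph $G$ isolate the connected component $C\ni 1$. If $|C|=k$, there are $\binom{n-1}{k-1}$ ways to choose which of $\{2,\dots,n\}$ belong to $C$; the remaining $n-k$ vertices carry an arbitrary graph. Using $\frac1{(n-1)!}\binom{n-1}{k-1} = \frac1{(k-1)!(n-k)!}$, the sum factorizes into a sum over $G'\in\caC_k$ (on $C$) times a sum over arbitrary graphs on the remaining $n-k$ vertices. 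The second factor reassembles into $Z$, while the first gives
\[
\sum_{k\geq1} \frac1{(k-1)!} \int\dd\mu(x_2)\dots\dd\mu(x_k) \sum_{G'\in\caC_k} \prod_{\{i,j\}\in G'} \zeta(x_i,x_j) = \sum_{k\geq1} k\int\dd\mu(x_2)\dots\dd\mu(x_k)\,\varphi(x_1,\dots,x_k) = \hat Z(x_1),
\]
using the definition \eqref{def comb fct}. This yields $Z(x_1) = \hat Z(x_1)\,Z$.

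For the two-point function I would perform the same decomposition but split into two cases according as vertices $1$ and $2$ lie in the same connected component or not. In the ``same component'' case, a component of size $k\geq 2$ containing both marked vertices contributes (after the analogous combinatorial bookkeeping, with $\frac1{(n-2)!}\binom{n-2}{k-2} = \frac1{(k-2)!(n-k)!}$) the factor
\[
\sum_{k\geq2} \frac1{(k-2)!} \int\dd\mu(x_3)\dots\dd\mu(x_k) \sum_{G\in\caC_k} \prod_{\{i,j\}\in G}\zeta(x_i,x_j) = \hat Z(x_1,x_2),
\]
times $Z$. In the ``different components'' case, the components of $1$ and $2$ are chosen independently (sizes $k_1$ and $k_2$), and the multinomial coefficient $\binom{n-2}{k_1-1,k_2-1,n-k_1-k_2}/(n-2)!$ again makes the three sums factor, giving $\hat Z(x_1)\,\hat Z(x_2)\,Z$. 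Summing the two cases gives the claimed formula.

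The only real subtlety is justifying the reordering of the multiple sums and integrals. I would first carry out the computation for a partial sum in $n$ and then pass to the limit, invoking Fubini together with the absolute bounds coming from Theorem \ref{thm clexp} (notably \eqref{the bound}) and the hypothesized integrability of $\e{a+2b}$ with respect to $|\mu|$. This is the only step where anything non-formal happens; the combinatorial identity itself is a direct specialization of the exponential formula for graphs, applied with one or two distinguished vertices.
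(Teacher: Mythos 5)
Your proposal is correct and follows essentially the same route as the paper: expand $Z(x_1)$ (resp.\ $Z(x_1,x_2)$) over all graphs, decompose according to the connected component(s) containing the marked vertices, and resum the unmarked components into $Z$. The only point worth flagging is that for the two-point identity the absolute convergence needed to justify the rearrangement is that of $\hat Z(x_1,x_2)$ itself, which is supplied by Theorem \ref{thm decay correlation functions} (proved first in the paper for exactly this reason) rather than directly by \eqref{the bound}.
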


In statistical mechanics, the relevant expression is the {\it truncated two-point correlation function}
\[
\frac{Z(x_1,x_2)}Z - \frac{Z(x_1) \, Z(x_2)}{Z^2}.
\]
When the cluster expansion converges, it is equal to $\hat Z(x_1,x_2)$ by the theorem above. This function usually provides an order parameter for phase transitions and it is useful to estimate its decay properties.

\begin{theorem}[Decay of correlations]
\label{thm decay correlation functions}
If Assumptions \ref{ass stability} and \ref{ass KP crit} hold true, we have for almost all $x,y \in \bbX$,
\bm
|\hat Z(x,y)| \leq \e{a(y)+2b(y)} \Bigl[ |\zeta(x,y)| \e{a(x)+2b(x)} + \\
+ \sum_{m\geq1} \int\dd|\mu|(x_1) \dots \int\dd|\mu|(x_m) \prod_{i=0}^m |\zeta(x_i,x_{i+1})| \e{a(x_i)+2b(x_i)} \Bigr] \nn
\end{multline}
(with $x_0 \equiv x$ and $x_{m+1} \equiv y$). If Assumptions \ref{ass stability} and \ref{ass tree crit} hold true, we have the same bound but with $|\overline u(\cdot,\cdot)|$ instead of $|\zeta(\cdot,\cdot)|$, and $\e{b(\cdot)}$ instead of $\e{2b(\cdot)}$.
\end{theorem}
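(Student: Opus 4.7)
The strategy is to extract from each connected graph contributing to $\hat Z(x,y)$ a canonical self-avoiding path from $x$ to $y$, and to bound the remaining ``decorations'' at each path vertex using the cluster-sum estimate \eqref{the bound} already established in Theorem \ref{thm clexp}.

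First I would rewrite
\[
\hat Z(x,y)=\sum_{n\geq 2}\frac{1}{(n-2)!}\int\dd\mu(x_3)\cdots\dd\mu(x_n)\sum_{G\in\caC_n}\prod_{\{i,j\}\in G}\zeta(x_i,x_j),
\]
with $x_1=x$ and $x_2=y$ (using $n(n-1)/n!=1/(n-2)!$ together with the definition of $\varphi$). Passing to absolute values produces an upper bound in which any indicator singling out a canonical path may later be harmlessly discarded. In each connected graph $G$ I would then select the shortest self-avoiding path from $x$ to $y$, breaking ties lexicographically on the unordered labels; call this path $x=y_0,y_1,\ldots,y_m,y_{m+1}=y$, of length $m+1\geq 1$.

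Shortest-path selection guarantees that there is no chord between non-adjacent path vertices, so every non-path edge of $G$ touches at least one off-path vertex. Assigning each off-path vertex to the smallest-index path vertex to which it is connected (in the graph obtained from $G$ by deleting the path edges) partitions the off-path vertices into groups, one per path vertex, and the product of $|\zeta|$ over non-path edges factorizes as a product over path vertices of $|\zeta|$-weighted sums of connected subgraphs rooted at that vertex. By \eqref{the bound}, the corresponding rooted cluster sum at the path vertex $y_i$ (including the trivial empty decoration contributing $1$) is at most $1+(\e{a(y_i)}-1)\e{2b(y_i)}\leq \e{a(y_i)+2b(y_i)}$. Summing over path lengths $m\geq 0$ and integrating over $y_1,\dots,y_m$ against $|\mu|$ then produces exactly the stated bound, with the factor $\e{a(y)+2b(y)}$ pulled outside the bracket as the contribution at $y_{m+1}=y$. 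The variant under Assumption \ref{ass tree crit} is obtained by the same scheme but with the tree-graph estimate of Section \ref{sec tree estimate} replacing the resummation over connected graphs: this converts $|\zeta|$ into $|\overline u|$ on path edges and $\e{2b(\cdot)}$ into $\e{b(\cdot)}$ at each path vertex.

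The main obstacle is the clean factorization of the decoration sum into independent single-vertex-rooted cluster sums: off-path connected components can attach to several path vertices simultaneously, so the canonical smallest-index rooting must be compatible with the shortest-path selection in order for \eqref{the bound} to apply componentwise without overcounting. Careful bookkeeping of which edges and vertices belong to the path, which to each decoration, and how the integration measures distribute, is what makes the argument both correct and sharp.
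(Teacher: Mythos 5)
There is a genuine gap, and it occurs at the very first step: you pass to absolute values \emph{inside} the sum over connected graphs, i.e.\ you propose to bound $|\hat Z(x,y)|$ by $\sum_n\frac1{(n-2)!}\int\dd|\mu|(x_3)\cdots\dd|\mu|(x_n)\sum_{G\in\caC_n}\prod_{\{i,j\}\in G}|\zeta(x_i,x_j)|$ and then reorganize that sum. This quantity is not controlled by the hypotheses and is generically divergent: the number of connected graphs on $n$ vertices grows like $2^{\binom n2}$, which overwhelms the $1/(n-2)!$, and the stability assumption only controls products of $|1+\zeta_{ij}|$, not sums of $\prod|\zeta_{ij}|$ over all connected graphs. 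The convergence of the expansion rests entirely on the cancellations in the \emph{signed} sum $\sum_{G\in\caC_n}\prod\zeta_{ij}$. For the same reason your appeal to \eqref{the bound} for the ``decoration'' sums fails: \eqref{the bound} bounds $\sum_n n\int|\varphi|$, where $|\varphi|$ is the absolute value of the signed connected-graph sum, whereas your rooted decoration sums have the absolute value on each edge factor. These are different objects, and only the former is finite.

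The missing idea is that one must first invoke Proposition \ref{prop tree estimate}(a), which converts the signed connected-graph sum into a sum over \emph{trees} of $\prod|\zeta_{ij}|$ at the price of the stability factor $\prod_i\e{2b(x_i)}$; this is exactly the tree estimate \eqref{tree estimate} used throughout Section \ref{sec proofs}. Once you are summing over trees, your path-extraction scheme becomes both well defined and clean: a tree contains a unique self-avoiding path from vertex $1$ to vertex $2$, there are no chords and no ambiguity in assigning off-path vertices (deleting the path edges leaves a forest in which each component is rooted at exactly one path vertex), and the rooted-subtree sums are precisely the quantities $K(\cdot)$ bounded by $\e{a(\cdot)+2b(\cdot)}$ in the induction \eqref{bound K_N} from the proof of Theorem \ref{thm clexp} --- not the cluster sums of \eqref{the bound}. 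This is the paper's proof. Your remark that the Assumption \ref{ass tree crit} variant requires the tree-graph estimate is correct, but that estimate (in its $|\zeta|$ form) is needed in the $|\zeta|$ case as well; the difficulty you flag at the end about components attaching to several path vertices is not a bookkeeping issue to be patched but a symptom that the decomposition must be performed on trees, not on general connected graphs.
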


In many applications the functions $\zeta(x,y)$ and $u(x,y)$ depend on the difference $x-y$ (this assumes that $\bbX$ has additional structure, namely that of a group). The estimates for $|\hat Z(x,y)|$ are given by convolutions.

The theorems of this section are proved in Section \ref{sec proofs}.

\section{Classical gas}
\label{sec class gas}

We consider a gas of point particles that interact with a pair potential. We work in the grand-canonical ensemble where the parameters are the fugacity $z$ and the inverse temperature $\beta$ (both are real and positive numbers). The set $\bbX$ is an open bounded subset of $\bbR^d$ and $\mu(x) = z\dd x$ with $\dd x$ the Lebesgue measure. We actually write $\Lambda=\bbX$ so as to have more traditional notation. The interaction is given by a function $U : \bbR^d \to \bbR \cup \{\infty\}$ which we take to be piecewise continuous; $u(x,y) = \beta U(x-y)$. We suppose that $U$ is stable, i.e.\ that there exists a constant $B\geq0$ such that for any $n$ and any $x_1,\dots,x_n \in \bbR^d$:
\be
\label{U stable}
\sum_{1\leq i<j\leq n} U(x_i-x_j) \geq -Bn.
\ee
Our Assumption \ref{ass stability} holds with $b(x) \equiv \beta B$. The system is translation invariant so all $x \in \bbR^d$ are equivalent. The function of Assumptions \ref{ass KP crit} and \ref{ass tree crit} can then be taken to be a constant, $a(x) \equiv a$. We seek a condition that does not depend on the size of the system. Then integrals over $y$ are on $\bbR^d$ instead of $\Lambda$. By translation invariance we can take $x=0$.

Assumption \ref{ass KP crit} gives the condition
\be
\label{premiere condition suffisante}
z \e{2\beta B} \int_{\bbR^d} \bigl| \e{-\beta U(y)} - 1 \bigr| \dd y \leq a \e{-a}.
\ee
We obviously choose the constant $a$ that maximizes the right side, which is $a=1$. This condition is the one in \cite{Rue}. Let us now assume that $U$ consists of a hard core of radius $r$ and that it is otherwise integrable. Again with $a=1$, Assumption \ref{ass tree crit} gives the condition
\be
\label{deuxieme condition suffisante}
z \e{\beta B} \Bigl[ |\bbB| r^d + \beta \int_{|y|>r} |U(y)| \dd y \Bigr] \leq \e{-1}.
\ee
Here, $|\bbB| = \pi^{d/2}/\Gamma(\frac d2+1)$ is the volume of the ball in $d$ dimensions.
This condition is often better than \eqref{premiere condition suffisante}. Without hard core it is the one in \cite{BrF}. The domains of parameters where these conditions hold correspond to low fugacities and high temperatures.

The thermodynamic pressure is defined as the infinite volume limit of
\be
p_\Lambda(\beta,z) = \frac1{|\Lambda|} \log Z.
\ee
Using Theorem \ref{thm clexp}, we have
\be
p_\Lambda(\beta,z) = \frac1{|\Lambda|} \int_\Lambda \dd x_1 \biggl[ \sum_{n\geq1} z^n \int_\Lambda \dd x_2 \dots \int_\Lambda \dd x_n \varphi(x_1,\dots,x_n) \biggr]
\ee
Consider now any sequence of increasing domains $\Lambda_1 \subset \Lambda_2 \subset \dots$ such that $\Lambda_n \to \bbR^d$. Thanks to the estimate \eqref{the bound}, and using translation invariance, we get
\be
p(\beta,z) \equiv \lim_{n\to\infty} p_{\Lambda_n}(\beta,z) = \sum_{n\geq1} z^n \int_{\bbR^d} \dd x_2 \dots \int_{\bbR^d} \dd x_n \, \varphi(0,x_2,\dots,x_n).
\ee
(The term with $n=1$ is equal to $z$.) This expression for the infinite volume pressure $p(\beta,z)$ should be viewed as a convergent series of analytic functions of $\beta,z$. Then $p(\beta,z)$ is analytic in $\beta$ and $z$ by Vitali theorem and no phase transition takes place in the domain of parameters where the cluster expansion is convergent.

The truncated two-point correlation function $\sigma(x)$ is given by $\hat Z(0,x)$. We consider only the case of Assumption \ref{ass KP crit} but a similar claim can be obtained with Assumption \ref{ass tree crit}. Let $c(x)$ be a function that satisfies the triangle inequality. The estimate of Theorem \ref{thm decay correlation functions} yields
\be
\e{c(x)} \sigma(x) \leq \e{2+4\beta B} \sum_{m\geq0} z^m \e{m+2\beta Bm} \Bigl( \e{c(\cdot)} \bigl| \e{-\beta U(\cdot)} - 1 \bigr| \Bigr)^{*m}(x)
\ee
(with $f^{*0} \equiv f$). Recall that $\|f^{*n}\|_\infty \leq \|f\|_\infty \|f\|_1^{n-1}$, and let
\be
C_p = \bigl\| \e{c(\cdot)} \bigl| \e{-\beta U(\cdot)} - 1 \bigr| \bigr\|_p.
\ee
Then we get
\be
\sigma(x) \leq \e{-c(x)} \e{2+4\beta B} \tfrac{C_\infty}{C_1} \bigl( 1 - z \e{1+2\beta B} C_1 \bigr)^{-1}.
\ee
If the inequality \eqref{premiere condition suffisante} is strict, one can usually find a function $c(x)$ that satisfies the triangle inequality and such that $C_1 \leq (z \e{1+2\beta B})^{-1}$; the truncated two-point correlation function then decays faster than $\e{-c(x)}$.

\section{Polymer systems}
\label{sec class polymers}

Polymer systems are discrete, which is technically simpler, but they also have internal structure. The first application of cluster expansions to polymer systems is due to Gruber and Kunz \cite{GK}. Among the many articles devoted to this subject, let us mention \cite{KP,Dob,FP}. The main goal of this section is to illustrate our setting; we therefore restrict ourselves to a specific model of polymers with both repulsive and attractive interactions.

Our space $\bbX$ is the set of all finite connected subsets of $\bbZ^d$. The measure $\mu$ is the counting measure multiplied by the {\it activity} $z(x)$ (a function $\bbX \to \bbR_+$). We choose $z(x) = \e{-\gamma|x|}$ with $\gamma>0$. The interaction is hard core when polymers overlap and it is attractive when they touch:
\be
u(x,y) = \begin{cases} \infty & \text{if } x \cap y \neq \emptyset, \\ -\eta \, c(x,y) & \text{if } x \cap y = \emptyset. \end{cases}
\ee
Here, $c(x,y)$ is the number of ``contacts" between $x$ and $y$, i.e.\ the number of bonds between sites of $x$ and $y$; $\eta>0$ is a parameter. The interaction is zero when the distance between polymers is greater than 1.

The stability condition can be written
\be
\tfrac12 \sum_{i=1}^n \sum_{j\neq i} u(x_i,x_j) \geq -\sum_{i=1}^n b(x_i).
\ee
Only disjoint polymers need to be considered, the left side is infinite otherwise. The sum over $j$ is always larger than $-\eta$ times the number of bonds connecting $x_j$ with its exterior. Thus we can take $b(x) = \eta d |x|$.

The function $a$ in Assumption \ref{ass tree crit} grows like $|x|$, so it is natural to choose $a(x) = a|x|$ for some constant $a$. A sufficient condition is that
\be
\sum_{y, y \cap x \neq \emptyset} z(y) \e{a|y| + \eta d|y|} + \sum_{y, \dist(x,y) = 1} \eta z(y) c(x,y) \e{a|y| + \eta d|y|} \leq a|x|.
\ee
We can bound $\eta c(x,y)$ by $2\eta d|y|$. Summing over the sites of $x$, and requiring that $y$ contains the given site or comes at distance 1, we get
\be
\sum_{y \ni 0} (1+2d\eta |y|) z(y) \e{a|y| + \eta d|y|} \leq a.
\ee
We used the fact that the activity is translation invariant. If $x$ is a connected set, there exists a closed walk with nearest neighbor jumps whose support is $x$, and whose length is at most $2(|x|-1)$. This can be seen by induction: knowing the walk for $x$, it is easy to construct one for $x \cup \{i\}$. The number of connected sets of cardinality $n$ that contain the origin is therefore smaller than the number of walks of length $2n-3$ starting at the origin, which is equal to $(2d)^{2n-3}$. Then it suffices that
\be
\sum_{n\geq1} \e{-n (\gamma - a - 3d\eta - 2\log 2d)} \leq (2d)^3a.
\ee
This is equivalent to
\be
\gamma \geq a + \log\bigl( 1 + \tfrac1{(2d)^3 a} \bigr) + 3d\eta + 2\log 2d.
\ee
Assumption \ref{ass tree crit} holds for any $a$. Using $\log(1+t) \leq t$ and optimizing on $a$, we find the sufficient condition
\be
\gamma \geq 2 (2d)^{-3/2} + 3d\eta + 2\log 2d
\ee
with $a = (2d)^{-3/2}$.

We have just established the existence of a low density phase provided the activity is small enough. The condition depends on the contact parameter $\eta$. For large $\eta$ one should expect interesting phases with many contacts between the polymers.

\section{Quantum gas}
\label{sec quantum gas}

We follow a course that is similar to Ginibre \cite{Gin}, using the Feynman-Kac formula so as to get a gas of winding Brownian loops. We get comparable results, with a larger domain of convergence in the case of integrable potentials. Winding Brownian loops are kind of continuous polymers; they combine the difficulties of both cases above --- the continuous nature and the internal structure.

\subsection{Feynman-Kac representation}

The state space for $N$ fermions (resp.\ bosons) in a domain $\Lambda \subset \bbR^d$ is the Hilbert space $L^2_{\rm anti}(\Lambda^N)$ (resp.\ $L^2_{\rm sym}(\Lambda^N)$) of square-integrable complex functions that are antisymmetric (resp.\ symmetric) with respect to their arguments. The Hamiltonian is
\be
\label{quantum Ham}
H = -\sum_{i=1}^N \Delta_i + \sum_{1\leq i<j \leq N} U(q_i-q_j),
\ee
with $\Delta_i$ the Laplacian for the $i$-th variable and $U(q)$ a multiplication operator. As in the classical case, we consider the grand-canonical ensemble whose parameters are the fugacity $z$ and the inverse temperature $\beta$. The partition function is
\be
\label{quantum part fct}
Z = \sum_{N\geq0} z^N \Tr \e{-\beta H}.
\ee

We need to cast the partition function in the form \eqref{def Z}, which can be done using the Feynman-Kac representation. Namely, we have
\bm
\label{FK rep}
Z = \sum_{N\geq0} \frac{z^N}{N!} \sum_{\pi\in\caS_N} \varepsilon(\pi) \int_{\Lambda^N} \dd q_1 \dots \dd q_N  \int\dd W_{q_1,q_{\pi(1)}}^{2\beta}(\omega_1) \dots \int\dd W_{q_N,q_{\pi(N)}}^{2\beta}(\omega_N)  \\\Bigl( \prod_{i=1}^N \upchi_\Lambda(\omega_i) \Bigr) \exp\Bigl\{ -\tfrac12 \sum_{1\leq i<j\leq N} \int_0^{2\beta} U \bigl( \omega_i(s)-\omega_j(s) \bigr) \dd s \Bigr\}.
\end{multline}
$\caS_N$ is the symmetric group of $N$ elements; $\varepsilon(\pi)$ is equal to the signature of the permutation $\pi$ for fermions, $\varepsilon(\pi) \equiv 1$ for bosons; $W_{q,q'}^t$ is the Wiener measure for the Brownian bridge from $q$ to $q'$ in time $t$ --- the normalization is chosen so that
\be
\int\dd W_{q,q'}^t(\omega) = (2\pi t)^{-d/2} \e{-|q-q'|^2/2t};
\ee
$\upchi_\Lambda(\omega)$ is one if $\omega(s) \in \Lambda$ for all $0 \leq s \leq 2\beta$, it is zero otherwise. An introduction to the Feynman-Kac formula in this context can be found in the survey of Ginibre \cite{Gin}.

The right side of Eq.\ \eqref{FK rep} is well defined for a large class of functions $U : \bbR^d \to \bbR \cup \{\infty\}$, that includes all piecewise continuous functions. Thus we take \eqref{FK rep} as the definition for $Z$. Under additional assumptions on $U$, \eqref{FK rep} is equal to \eqref{quantum part fct} with Hamiltonian \eqref{quantum Ham} and with Dirichlet boundary conditions.

We now rewrite the grand-canonical partition function in terms of winding loops. Let $\bbX_k$ be the set of continuous paths $[0,2\beta k] \to \bbR^d$ that are closed. Its elements are denoted $x = (q,k,\omega)$, with $q \in \bbR^d$ the starting point, $k$ the winding number, and $\omega$ the path; we have $\omega(0) = \omega(2\beta k) = q$. We consider the measure $\mu$ given by
\be
\mu(\dd x) = \frac{\varepsilon^{k+1} z^k}k \dd q \e{-v(x)} \upchi_\Lambda(\omega) W_{q,q}^{2\beta k}(\dd\omega).
\ee
Here, $v(x)$ is a self-interaction term that is defined below in \eqref{def uv}; $\varepsilon=-1$ for fermions and $+1$ for bosons. Let $\bbX = \cup_{k\geq1} \bbX_k$; the measure $\mu$ naturally extends to a measure on $\bbX$. The grand-canonical partition function can then be written as
\be
Z = \sum_{n\geq0} \frac1{n!} \int_{\bbX^n} \dd\mu(x_1) \dots \dd\mu(x_n) \exp\Bigl\{ -\sum_{1\leq i<j\leq n} u(x_i,x_j) \Bigr\}.
\ee
Let $x=(q,k,\omega)$ and $x'=(q',k',\omega')$. The self-interaction $v(x)$ and the 2-loop interaction $u(x,x')$ are given by
\be
\label{def uv}
\begin{split}
&v(x) = \sum_{0\leq\ell<m\leq k-1} \tfrac12 \int_0^{2\beta} U \bigl( \omega(s+2\beta\ell) - \omega(s+2\beta m) \bigr) \dd s, \\
&u(x,x') = \sum_{\ell=0}^{k-1} \sum_{\ell'=0}^{k'-1} \tfrac12 \int_0^{2\beta} U \bigl( \omega(s+2\beta\ell) - \omega'(s+2\beta\ell') \bigr) \dd s.
\end{split}
\ee

We now treat separately the case of integrable potentials and the case of more general potentials.

\subsection{Stable integrable potentials}
\label{sec stable integrable}

We suppose that $U$ is stable with constant $B$, i.e.\ it satisfies Eq.\ \eqref{U stable}. For given loops $x_1,\dots,x_n$, stability implies that
\be
\sum_{i=1}^n v(x_i) + \sum_{1\leq i<j\leq n} u(x_i,x_j) \geq -\beta B \sum_{i=1}^n k_i.
\ee
Then Assumption \ref{ass stability} holds with
\be
\label{stability condition}
b(x) = \beta B k + v(x).
\ee
Notice that $b(x) \geq 0$, again by the stability of $U$.

We use Assumption \ref{ass tree crit} and we choose a function $a(x) = ak$ with a constant $a$ to be determined later. Explicitly, the assumption is that for any $x=(q,k,\omega)$
\be
\label{explicit condition}
\sum_{k'\geq1} \frac{z^{k'} \e{ak'}}{k'} \int_{\bbR^d} \dd q' \int\dd W_{q',q'}^{2k'\beta}(\omega') \e{-v(x')+b(x')} |u(x,x')| \leq ak.
\ee
We have lifted the restriction that $\omega'(s) \in \Lambda$ because we want a condition that does not depend on $\Lambda$. Eq.\ \eqref{explicit condition} is easier to handle than its appearance suggests. Notice that the term in the second exponential is just $\beta Bk'$. Using $\int\dd W_{qq}^t(\omega) f(\omega) = \int\dd W_{00}^t(\omega) f(\omega+q)$, and the definition \eqref{def uv} of $u(x,x')$, it is enough that
\bm
\sum_{\ell=0}^{k-1} \tfrac12 \int_0^{2\beta} \dd s \sum_{k'\geq1} \frac{z^{k'} \e{(a+\beta B)k'}}{k'} \sum_{\ell'=0}^{k'-1} \int_{\bbR^d} \dd q' \\
\int\dd W_{0,0}^{2k'\beta}(\omega') \bigl| U(\omega(s+2\beta\ell)-\omega'(s+2\beta\ell')-q') \bigr| \leq ak.
\end{multline}
We can immediately integrate over $q'$, which yields $\|U\|_1$. The Wiener integral then gives $(4\pi k'\beta)^{-d/2}$ and we get the equivalent condition
\be
\frac{\beta \|U\|_1}{(4\pi\beta)^{d/2}} \sum_{k'\geq1} \frac{z^{k'} \e{(a+\beta B)k'}}{(k')^{d/2}} \leq a.
\ee
For any $d,\beta,\|U\|_1,a$, the inequality holds for $z$ small enough. Notice that $z<1$ in any case. One can get a more explicit condition for $d\geq3$ by choosing $a$ such that $z \e{a+\beta B}$ = 1. This yields
\be
\label{condition sur z}
z \leq \exp\Bigl\{ -\beta \Bigl[ \frac{\|U\|_1 \zeta(\frac d2)}{(4\pi\beta)^{d/2}} + B \Bigr] \Bigr\}.
\ee
Here, $\zeta(\frac d2) = \sum_{n\geq1} n^{-\frac d2}$ is the Riemann zeta function.

When $d=3$ and when the potential is repulsive, one can rewrite \eqref{condition sur z} in a more transparent way. Let $a_0 = \frac1{8\pi} \|U\|_1$ denote the Born approximation to the scattering length. The condition is then
\be
z \leq \exp\bigl\{ -\tfrac{\zeta(\frac32)}{\sqrt\pi} \tfrac{a_0}{\sqrt\beta} \bigr\}.
\ee
The critical fugacity is expected to be greater than 1. The present result helps nonetheless to obtain a range of densities where the pressure is analytic. In the bosonic case it compares well with physicists' expectations \cite{SU}.

\subsection{Stable potentials with hard core}
\label{sec stable hard core}

The presence of a hard core makes the situation more complicated; we only sketch the argument in this section without trying to get explicit bounds. Our aim is to show that, using Theorem \ref{thm clexp}, the problem of convergence of the cluster expansion reduces to estimates of Wiener sausages.

We consider an interaction $U = U' + U''$. We assume that $U' \geq 0$ is a repulsive potential of radius $r$, with a hard core of radius $0< r' \leq r$, that $U''(q)=0$ for $|q|<r$, and that $U''$ is integrable otherwise. We suppose that the stability condition takes the slightly stronger form
\be
\sum_{i=1}^n U(q_0-q_i) \geq -B
\ee
for any $q_0,\dots,q_n$ such that $|q_i-q_j|>r'$. For potentials with a hard core this is equivalent to the property \eqref{U stable}, possibly with a different constant $B$. Then one has \cite{Gin1}
\be
\sum_{i=1}^n v(x_i) + 2 \sum_{1\leq i<j \leq n} u(x_i,x_j) \geq -2\beta B \sum_{i=1}^n k_i.
\ee
Then Assumption \ref{ass stability} holds with
\be
\label{new stability condition}
b(x) = \beta B k + \tfrac12 v(x).
\ee
Notice that the stability condition also holds with $b$ given in \eqref{stability condition} (and with a better constant $B$). The advantage of \eqref{new stability condition} is the factor $\frac12$ in front of $v(x)$. Then $\e{2b(x)}$ involves a term that cancels the self-interactions of $x$.

Given a loop $x=(q,k,\omega)$, let $S(x)$ be the Wiener sausage generated by a ball of radius $r$ when its center moves along the trajectory $\omega$:
\be
S(x) = \bigl\{ q \in \bbR^d : |\omega(s)-q| \leq r \text{ for some } s \in [0,2k\beta] \bigr\}.
\ee
We also define the Wiener sausage that corresponds to the difference of two winding loops $x = (q,k,\omega)$ and $x' = (q',k',\omega')$:
\bm
S(x,x') = \bigl\{ q'' : |\omega(s+2\beta\ell) - \omega'(s+2\beta\ell') - q''| \leq r \\
\text{ for some } \ell = 0,\dots,k-1; \ell' = 0,\dots,k'-1; s \in [0,2\beta] \bigr\}.
\end{multline}
We denote the volume of a Wiener sausage $S(\cdot)$ by $|S(\cdot)|$. One can check that
\be
\label{Ginibre bound}
|S(x,y)| \leq \frac{|S(x)| |S(y)|}{r^d |\bbB|}
\ee
with $|\bbB|$ the volume of the unit ball (see Appendix 2 in \cite{Gin1}).

We choose $a(x) = |S(x)| + k$ in Assumption \ref{ass KP crit}. Then a sufficient condition is that for any $x \in \bbX$,
\be
\label{ca suffit}
\sum_{k'\geq1} \frac{z^{k'} \e{(2\beta B + 1) k'}}{k'} \int_{\bbR^d} \dd q' \int\dd W^{2k'\beta}_{0,0}(\omega') |\zeta(x,x'+q')| \e{|S(x')|} \leq |S(x)| + k.
\ee
We consider separately the cases where $q'$ belongs or not to $S(x,x')$. First,
\be
\int_{S(x,x')} |\zeta(x,x'+q')| \, \dd q' \leq |S(x,x')|,
\ee
which we bound using \eqref{Ginibre bound}. Second, using $|\e{-u(x,x')}-1| \leq |u(x,x')| \e{\beta Bk'}$,
\be
\int_{\bbR^d \setminus S(x,x')} |\zeta(x,x'+q')| \, \dd q' \leq \beta \|U''\|_1 k k' \e{\beta B k'}.
\ee
We certainly get \eqref{ca suffit} if we have the two inequalities
\be
\begin{split}
&\sum_{k'\geq1} \frac{z^{k'} \e{(2\beta B + 1) k'}}{k'} \int\dd W^{2k'\beta}_{0,0}(\omega') \e{|S(x')|} |S(x')| \leq r^d |\bbB|, \\
&\sum_{k'\geq1} z^{k'} \e{(3\beta B + 1) k'} \int\dd W^{2k'\beta}_{0,0}(\omega') \e{|S(x')|}  \leq \frac1{\beta \|U''\|_1}.
\end{split}
\ee
One can estimate the integrals of Wiener sausages, see \cite{Gin1}, so that both conditions hold if $z$ is small enough.

Now that the cluster expansion is known to converge, it is possible to write the pressure as an absolutely convergent series of analytic functions in $\beta$ and $z$. It is also possible to study the decay of correlation functions. In the case of potentials that consist of hard core plus integrable part, one can apply Assumption 2' instead. This may give better results, especially if the integrable part is mostly attractive.

\section{Tree estimates}
\label{sec tree estimate}

In this section we obtain estimates of sums of connected graphs in terms of sums of trees. Our main result is Proposition \ref{prop tree estimate} below. Such estimates seem to have been introduced by Penrose \cite{Pen} and they have often been considered  in the past \cite{BrF,BaF,Bry,BK,Pfi,AR,Far}. We introduce a minimal setting that clarifies its r\^ole in the cluster expansion. Namely, we fix the polymers so we only deal with the numbers that represent their interactions, $\zeta$ or $u$, and the stability function $b$. Assumption \ref{ass stability} is vital here, but Assumptions \ref{ass KP crit} and \ref{ass tree crit} are not used in this section.

Let $\caT_n \subset \caC_n$ denote the set of trees with $n$ vertices. Let $n$ be an integer, $b_1,\dots,b_n$ be real nonnegative numbers, and $\zeta_{ij}=\zeta_{ji}$, $1\leq i,j \leq n$, be complex numbers. We assume that the following bound holds for any subset $I \subset \{1,\dots,n\}$:
\be
\label{stability}
\prod_{i,j \in I, i<j} |1+\zeta_{ij}| \leq \prod_{i\in I} \e{b_i},
\ee
Let $u_{ij}$ be such that $\zeta_{ij} = \e{-u_{ij}}-1$; let $\overline u_{ij} = 1$ if $\zeta_{ij}=-1$, and $\overline u_{ij} = u_{ij}$ otherwise. We state two distinct tree estimates, the first one involving $|\zeta_{ij}|$ and the second one involving $|\overline u_{ij}|$. These bounds will allow to prove the convergence under either Assumption \ref{ass KP crit} or Assumption \ref{ass tree crit}.

\begin{proposition}
\label{prop tree estimate}
If \eqref{stability} holds true, we have the two bounds
\begin{itemize}
\item[(a)] $\displaystyle \Bigl| \sum_{G \in \caC_n} \prod_{\{i,j\} \in G} \zeta_{ij} \Bigr| \leq  \Bigl( \prod_{i=1}^n \e{2b_i} \Bigr) \sum_{G \in \caT_n} \prod_{\{i,j\} \in G} |\zeta_{ij}|$;
\item[(b)] $\displaystyle \Bigl| \sum_{G \in \caC_n} \prod_{\{i,j\} \in G} \zeta_{ij} \Bigr| \leq  \Bigl( \prod_{i=1}^n \e{b_i} \Bigr) \sum_{G \in \caT_n} \prod_{\{i,j\} \in G} |\overline u_{ij}|$.
\end{itemize}
\end{proposition}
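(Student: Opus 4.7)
My plan is to use Penrose's tree-graph identity together with the stability hypothesis (\ref{stability}). Penrose's construction provides a canonical map $G\mapsto T(G)$ from $\mathcal{C}_n$ to $\mathcal{T}_n$ (e.g., the BFS spanning tree of $G$ rooted at vertex $1$ with ties broken by a fixed vertex ordering $\prec$) which partitions $\mathcal{C}_n$ into fibres: the fibre over a spanning tree $T$ is the order-interval $\{G: T\subseteq G\subseteq T\cup R(T)\}$, where
\[
R(T) = \bigl\{\{i,j\}\notin T:\, d_T(i)=d_T(j)\ \text{or}\ (d_T(j)=d_T(i)+1\text{ and }p_T(j)\prec i)\bigr\}
\]
($d_T$ depth from the root, $p_T$ parent map). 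This produces the identity
\[
\sum_{G\in\mathcal{C}_n}\prod_{\{i,j\}\in G}\zeta_{ij}=\sum_{T\in\mathcal{T}_n}\prod_{\{i,j\}\in T}\zeta_{ij}\prod_{\{i,j\}\in R(T)}(1+\zeta_{ij}),
\]
and reduces both (a) and (b) to bounding the $R(T)$-product using (\ref{stability}).

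For (a), I would show $\prod_{R(T)}|1+\zeta_{ij}|\leq\prod_v e^{2b_v}$ by a layering argument. Writing $V_d := d_T^{-1}(d)$ and $S_d := V_{d-1}\cup V_d$ for $d\geq 1$, every edge of $R(T)$ lies in some $\binom{S_d}{2}$, and $R(T)$ disjointly partitions as $R(T)=\bigsqcup_d R(T)_d$ with $R(T)_d\subset\binom{S_d}{2}$. The crucial property of Penrose's construction is that within each slab, application of (\ref{stability}) to $S_d$ yields $\prod_{R(T)_d}|1+\zeta_{ij}|\leq\prod_{v\in S_d}e^{b_v}$. Multiplying across $d$ and noting that each vertex belongs to at most two consecutive slabs produces the bound $\prod_v e^{2b_v}$; taking absolute values in Penrose's identity then proves (a).

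For (b), I would first rewrite each tree-edge factor via $\zeta_{ij}=-\int_0^1 u_{ij}\,e^{-s_{ij}u_{ij}}\,ds_{ij}$ (the $\Re u_{ij}=\infty$ case reducing to $\zeta_{ij}=-\overline u_{ij}$ automatically). Substituted into Penrose's identity, the tree product becomes $\prod_T|\overline u_{ij}|$ times an integrated interpolated product over $T\cup R(T)$ whose factors are $e^{-s_{ij}u_{ij}}$ on tree edges and $(1+\zeta_{ij})$ on $R(T)$-edges. This combined product obeys a stability-type bound uniformly in $\vec{s}\in[0,1]^{|T|}$, dominated by $\prod_v e^{b_v}$ --- a single factor per vertex rather than two, because the tree and non-tree contributions are now bounded jointly by one application of stability per slab rather than separately. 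Integrating out the $\vec s$ variables and using $|\int_0^1 u_{ij}\,e^{-s u_{ij}}\,ds|\leq|\overline u_{ij}|$ then yields (b).

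The main obstacle is the combinatorial content of Penrose's identity and its refinement for (b): one must first verify that the BFS map partitions $\mathcal{C}_n$ with fibres $[T, T\cup R(T)]$, and then that Penrose's specific $R(T)$ admits the slab-wise stability bound $\prod_{R(T)_d}|1+\zeta_{ij}|\leq\prod_{v\in S_d}e^{b_v}$ from (\ref{stability}) applied to $S_d$ (the delicate point being that $R(T)_d$ is only a subset of $\binom{S_d}{2}$, so one must exploit Penrose's labeling condition carefully to control the partial product). For (b), the extra step is verifying that stability remains applicable after the tree-edge interpolation $u_{ij}\mapsto s_{ij}u_{ij}$; this relies on the fact that (\ref{stability}) is assumed uniformly for all subsets $I$, from which one deduces that partial rescaling of a subset of pair potentials preserves the relevant bound and thus propagates through the Taylor expansion.
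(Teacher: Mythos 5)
Your route is genuinely different from the paper's: you propose the Penrose partition scheme for both parts, whereas the paper proves (a) by Ruelle's algebraic method (a Kirkwood--Salsburg-type recursion for $g(I,J)=(\Psi^{*(-1)}*D_I\Psi)(J)$, dominated by an explicitly solvable recursion whose solution is a sum over rooted forests) and proves (b) via the Brydges--Battle--Federbush interpolation identity combined with Procacci's regularization of hard cores. Unfortunately both halves of your argument break at the same step. The hypothesis \eqref{stability} controls only the \emph{complete} product $\prod_{i,j\in I,\,i<j}|1+\zeta_{ij}|$ over all pairs of a subset $I$; it gives no control over partial products, and your slab bound $\prod_{e\in R(T)_d}|1+\zeta_e|\leq\prod_{v\in S_d}\e{b_v}$ is exactly such a partial-product bound, since $R(T)_d$ is a proper subset of $\binom{S_d}{2}$. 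It is false in general. Concretely, take $i_1\prec i_2$ at depth $d-1$ and $j$ at depth $d$ with $p_T(j)=i_1$, so that $\{i_1,i_2\}$ and $\{i_2,j\}$ both lie in $R(T)$ while $\{i_1,j\}$ is a tree edge; choosing $u_{i_1i_2}=-(b_{i_1}+b_{i_2})$, $u_{i_2j}=-(b_{i_2}+b_j)$ and $u_{i_1j}=b_{i_2}$ satisfies \eqref{stability} for every subset, yet the product over these two $R(T)$-edges equals $\e{b_{i_1}+2b_{i_2}+b_j}$, exceeding the claimed $\prod_{v\in S_d}\e{b_v}$. Applying \eqref{stability} to the triple and dividing out $|1+\zeta_{i_1j}|$ cannot repair this, because the omitted factor may be arbitrarily small (or zero, for a hard core). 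The same objection defeats (b): after interpolating only the tree edges you would need $\sum_{e\in T}s_e\Re u_e+\sum_{e\in R(T)}\Re u_e\geq-\sum_i b_i$, a bound on a partially weighted, incomplete sum that does not follow from stability --- the whole point of the BBF measure $\lambda_G$ is that it attaches interpolation parameters to \emph{all} pairs in such a way that the full interpolated sum inherits stability.

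Two further points. First, in (b) the identity $\zeta_{ij}=-\int_0^1u_{ij}\e{-su_{ij}}\dd s$ is meaningless when $\Re u_{ij}=\infty$, so the hard-core edges must be regularized before any interpolation; the paper does this with Procacci's $m\to\infty$, $\varepsilon\to0$ limit, and asserting that this case ``reduces automatically'' skips the step where most of the difficulty of (b) resides. Second, to be clear about what is and is not at stake: the proposition itself is true (the paper proves it), but your intermediate slab lemma is the missing idea, and it is not supplied by ``exploiting Penrose's labeling condition'' --- bounding the fibre product $\prod_{e\in R(T)}|1+\zeta_e|$ for a merely stable (partly attractive) interaction is precisely the obstruction that forces the literature, and this paper, onto the algebraic or interpolation routes.
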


We actually conjecture that the following estimate holds under the same hypotheses:
\be
\Bigl| \sum_{G \in \caC_n} \prod_{\{i,j\} \in G} \zeta_{ij} \Bigr| \leq  \Bigl( \prod_{i=1}^n \e{b_i} \Bigr) \sum_{G \in \caT_n} \prod_{\{i,j\} \in G} \min(|u_{ij}|, |\zeta_{ij}|).
\ee

We prove Proposition \ref{prop tree estimate} (a) below using Ruelle's algebraic approach \cite{Rue}. This method is usually combined with a Banach fixed point argument for correlation functions. However, we use it differently so as to get a tree estimate. Proposition \ref{prop tree estimate} (b) follows from a tree identity due to Brydges, Battle, and Federbush \cite{BrF,BaF,Bry}, combined with an argument due to Procacci \cite{Pro}; its proof can be found at the end of this section.

Let $\caA$ be the set of complex functions on the power set $\caP(\{1,\dots,n\})$. We introduce the following multiplication operation for $f,g \in \caA$:
\be
f*g(I) = \sum_{J \subset I} f(J) g(I \setminus J).
\ee
We use the standard conventions for sums and products, namely that the empty sum is zero and the empty product is one. Together with the addition, $\caA$ is a commutative algebra with unit $1_\caA(I) = \delta_{I,\emptyset}$. It is possible to check that each $f \in \caA \setminus \{0\}$ has a unique inverse, which we denote $f^{*(-1)}$. We have
\be
f^{*k}(I) = \sumtwo{J_1,\dots,J_k \subset I}{J_i \cap J_j = \emptyset, \cup J_i = I} f(J_1) \dots f(J_k).
\ee
Let $\caA_0$ be the subset of functions $f$ such that $f(\emptyset)=0$ ($\caA_0$ is an ideal of $\caA$). Notice that $f^{*k}=0$ for any $k>n$ , when $f \in \caA_0$. We define the exponential mapping $\exp_\caA : \caA_0 \to \caA_0 + 1_\caA$ by
\be
\exp_\caA f = 1_\caA + f + \tfrac12 f^{*2} + \dots + \tfrac1{n!} f^{*n}.
\ee

Let $\Phi$ and $\Psi$ be the functions defined by
\be
\label{def Phi Psi}
\begin{split}
&\Phi(I) = \sum_{G \in \caC(I)} \prod_{\{i,j\} \in G} \zeta_{ij}, \\
&\Psi(I) = \prod_{i,j \in I, i<j} (1+\zeta_{ij}) = \sum_{G \in \caG(I)} \prod_{\{i,j\} \in G} \zeta_{ij}.
\end{split}
\ee
Here, $\caG(I)$ (resp.\ $\caC(I)$) is the set of graphs (resp.\ connected graphs) on $I$. We have the relation
\be
\Psi = \exp_\caA \Phi.
\ee

We also introduce an operation that is reminiscent of differentiation:
\be
D_J f(I) = \begin{cases} f(I \cup J) & \text{if } I \cap J = \emptyset, \\ 0 & \text{otherwise.} \end{cases}
\ee
One can check that $D_{\{i\}} \exp_\caA f = (\exp_\caA f) * D_{\{i\}} f$.

For disjoint $I,J \subset \{1,\dots,n\}$, we define
\be
\label{def f}
g(I,J) = \bigl( \Psi^{*(-1)} * D_I \Psi \Big) (J).
\ee

Let $I \subset \{1,\dots,n\}$. The assumption of Proposition \ref{prop tree estimate} implies that
\be
\prod_{i\in I} \prod_{j\in I\setminus\{i\}} |1 + \zeta_{ij}| \leq \prod_{i \in I} \e{2b_i}.
\ee
Then there exists $i \in I$ such that
\be
\prod_{j\in I\setminus\{i\}} |1 + \zeta_{ij}| \leq \e{2b_i}.
\ee
Such $i$ is not unique in general but it does not matter. We consider a function $\iota$ that assigns one of the indices $i = \iota(I)$ above to each nonempty subset $I \subset \{1,\dots,n\}$. Notice that $\iota(I) \in I$ for any subset $I$. It is also useful to introduce the notation $I' = I \setminus \{\iota(I)\}$.

\begin{lemma}
\label{lem KS equation}
The function $g(I,J)$ of Eq.\ \eqref{def f} is solution of the following equation.
\[
\begin{cases} g(\emptyset,J) = \delta_{\emptyset,J}, & \\ \displaystyle g(I,J) = \Bigl( \prod_{i \in I'} (1 + \zeta_{i,\iota(I)}) \Bigr) \sum_{K \subset J} \Bigl( \prod_{i \in K} \zeta_{i,\iota(I)} \Bigr) g(I' \cup K, J \setminus K) & \text{if } I \neq \emptyset. \end{cases}
\]
\end{lemma}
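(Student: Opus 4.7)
The plan is to handle the two cases separately and reduce the $I \neq \emptyset$ case to a single multiplicative factorization of $D_I \Psi$. For $I = \emptyset$ one has $D_\emptyset \Psi = \Psi$ directly from the definition, so $g(\emptyset,J) = (\Psi^{*(-1)} * \Psi)(J) = 1_\caA(J) = \delta_{\emptyset,J}$, giving the first line of the system.

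For $I \neq \emptyset$, set $j = \iota(I)$ and $I' = I \setminus \{j\}$. The heart of the proof is the pointwise identity
\[
D_I \Psi(L) = \Bigl(\prod_{a \in I'}(1+\zeta_{aj})\Bigr) \Bigl(\prod_{b \in L}(1+\zeta_{bj})\Bigr) D_{I'} \Psi(L),
\]
valid for every $L \subset \{1,\dots,n\}$ disjoint from $I$. To see it, expand $\Psi(L \cup I) = \Psi(L \cup I' \cup \{j\})$ and partition the edges of the complete graph on $L \cup I$ into five classes: those within $L$, within $I'$, between $I'$ and $L$, between $I'$ and $j$, and between $L$ and $j$. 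The first three reassemble into $\Psi(L \cup I') = D_{I'}\Psi(L)$, while the last two furnish the constant $C := \prod_{a \in I'}(1+\zeta_{aj})$ and the weight $w_j(L) := \prod_{b \in L}(1+\zeta_{bj})$.

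Next, expand $w_j(L) = \sum_{K \subset L} \prod_{b \in K} \zeta_{bj}$, which is legitimate because $j \notin L$ (since $j \in I$ and $L \subset J$ with $I \cap J = \emptyset$). Substituting into the convolution definition of $g(I,J)$, and observing that $L \cap I = \emptyset$ automatically for $L \subset J$, gives
\[
g(I,J) = C \sum_{L \subset J} \sum_{K \subset L} \Bigl(\prod_{b \in K} \zeta_{bj}\Bigr) \Psi^{*(-1)}(J \setminus L) \, D_{I'}\Psi(L).
\]
Exchanging the order of summation, with $M := L \setminus K$ so that $K \subset J$ and $M \subset J \setminus K$, and using the identity $D_{I'}\Psi(K \cup M) = \Psi(I' \cup K \cup M) = D_{I' \cup K}\Psi(M)$ — legitimate because $I'$, $K$, $M$ are pairwise disjoint — the inner $M$-sum collapses to $(\Psi^{*(-1)} * D_{I'\cup K}\Psi)(J \setminus K) = g(I' \cup K, J \setminus K)$. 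Collecting factors then reproduces exactly the second line of the recursion.

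The substantive content is concentrated in the factorization identity above; everything else is a rearrangement of sums. The main obstacle I expect is not conceptual but notational: one must verify at every step that the argument of each $D_\cdot \Psi$ is disjoint from its subscript, so that no convolution silently picks up zero contributions from forbidden overlaps, and that the reinterpretation $D_{I'}\Psi(K\cup M) = D_{I'\cup K}\Psi(M)$ remains valid after the order of summation is swapped.
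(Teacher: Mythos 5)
Your proof is correct and follows essentially the same route as the paper's: factor out of $\Psi(I\cup L)$ the edges incident to $\iota(I)$, expand the product over the vertices of $L$ as a sum over subsets, substitute into the convolution defining $g(I,J)$, and exchange the order of summation to recognize $g(I'\cup K, J\setminus K)$. The only differences are cosmetic (variable names, systematic use of the $D$ operator, and an explicit check of the base case $I=\emptyset$, which the paper leaves implicit).
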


Since the equation gives $g(I,J)$ in terms of $g(K,L)$ with $|K|+|L| = |I| + |J| - 1$, it is well defined inductively and it has a unique solution. Notice that $g(\emptyset,\emptyset)=1$, and that $g(\{i\},\emptyset) = 1$ for any index $i$.

\begin{proof}
Recall the definition \eqref{def Phi Psi} of $\Psi$. For disjoint $I,K$ we have
\be
\begin{split}
\Psi(I \cup K) &=\Bigl( \prod_{j \in I' \cup K} (1+\zeta_{j,\iota(I)}) \Bigr) \Psi(I' \cup K) \\
&= \Bigl( \prod_{j \in I'} (1+\zeta_{j,\iota(I)}) \Bigr) \Bigl( \sum_{L \subset K} \prod_{k \in L} \zeta_{k,\iota(I)} \Bigr) \Psi(I' \cup K).
\end{split}
\ee
Then
\be
\begin{split}
g(I,J) &= \sum_{K \subset J} \Psi^{*(-1)}(J \setminus K) \Psi(I \cup K) \\
&= \Bigl( \prod_{j \in I'} (1+\zeta_{j,\iota(I)}) \Bigr) \sum_{L \subset K \subset J} \bigl( \prod_{k \in L} \zeta_{k,\iota(I)} \Bigr) \Psi^{*(-1)}(J \setminus K) \Psi(I' \cup K) \\
&= \Bigl( \prod_{j \in I'} (1+\zeta_{j,\iota(I)}) \Bigr) \sum_{L \subset J} \bigl( \prod_{k \in L} \zeta_{k,\iota(I)} \Bigr) \sum_{K' \subset J \setminus L} \Psi^{*(-1)}(J \setminus L \setminus K') \Psi(I' \cup L \cup K').
\end{split}
\ee
The last sum is equal to $g(I' \cup L, J \setminus L)$. One recognizes the equation of Lemma \ref{lem KS equation}.
\end{proof}

We now estimate the function $g$ using another function $h$ that satisfies an equation that is similar to that of Lemma \ref{lem KS equation}.
\be
\label{KS bound}
\begin{cases} h(\emptyset,J) = \delta_{\emptyset,J}, & \\ \displaystyle h(I,J) = \e{2b_{\iota(I)}} \sum_{K \subset J} \Bigl( \prod_{i \in K} |\zeta_{i,\iota(I)}| \Bigr) h(I' \cup K, J \setminus K) & \text{if } I \neq \emptyset. \end{cases}
\ee
It also has a unique solution. Since $\prod_{i \in I'} |1 + \zeta_{i,\iota(I)}| \leq \e{2b_{\iota(I)}}$, we can check inductively that
\be
\label{I bounded by J}
|g(I,J)| \leq h(I,J)
\ee
for any sets $I,J$ (with $I \cap J = \emptyset$). Now the function $h$ can be written explicitly \cite{MP,PZ}. Let $\caF_I(J)$ be the set of forests on $I \cup J$ rooted in $I$. That is, a graph $G \in \caF_I(J)$ is a forest such that each tree contains exactly one element of $I$.

\begin{lemma}
The solution of Eq.\ \eqref{KS bound} is
\[
h(I,J) = \Bigl( \prod_{i \in I \cup J} \e{2b_i} \Bigr) \sum_{G \in \caF_I(J)} \prod_{\{i,j\} \in G} |\zeta_{ij}|.
\]
\end{lemma}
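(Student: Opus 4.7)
The plan is to verify the explicit formula by induction on $|I|+|J|$, exploiting uniqueness of the solution to the recursion \eqref{KS bound}. First observe that the recursion determines $h(I,J)$ inductively from values $h(I'\cup K, J\setminus K)$ with $|I'\cup K|+|J\setminus K|=|I|+|J|-1$, so once the base case and the recursion are satisfied the function is uniquely pinned down. The base case is immediate: for $I=\emptyset$ the set $\caF_\emptyset(J)$ is empty unless $J=\emptyset$ as well (there must be one root per tree, drawn from $I$), and $\caF_\emptyset(\emptyset)$ consists of the empty forest with weight $1$; hence the proposed formula yields $\delta_{\emptyset,J}$, matching.

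The substantive step is a combinatorial bijection that reduces $\caF_I(J)$ to smaller rooted forests by peeling off the chosen root $\iota(I)$. Given $G\in\caF_I(J)$, consider the tree $T\subset G$ containing $\iota(I)$, let $K\subset J$ be the set of neighbors of $\iota(I)$ in $T$, and let $G'$ be the graph obtained from $G$ by deleting the vertex $\iota(I)$ and all $|K|$ edges incident to it. Then $G'$ is a forest on $I'\cup J$ whose components are: the original trees of $G$ rooted in $I'$, together with the subtrees of $T$ hanging from each $k\in K$, each now viewed as a tree rooted at $k$. In other words $G'\in\caF_{I'\cup K}(J\setminus K)$. Conversely, any $K\subset J$ and any $G'\in\caF_{I'\cup K}(J\setminus K)$ can be glued back by adding the vertex $\iota(I)$ and the edges $\{k,\iota(I)\}$ for $k\in K$, producing a unique $G\in\caF_I(J)$. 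This establishes a bijection
\[
\caF_I(J) \;\longleftrightarrow\; \bigsqcup_{K\subset J} \caF_{I'\cup K}(J\setminus K),
\]
under which edge weights factor as $\prod_{\{i,j\}\in G}|\zeta_{ij}| = \bigl(\prod_{k\in K}|\zeta_{k,\iota(I)}|\bigr)\prod_{\{i,j\}\in G'}|\zeta_{ij}|$.

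Plugging this decomposition into the proposed formula and splitting the prefactor as $\prod_{i\in I\cup J}\e{2b_i} = \e{2b_{\iota(I)}}\prod_{i\in (I'\cup K)\cup(J\setminus K)}\e{2b_i}$, one obtains
\[
\Bigl(\prod_{i\in I\cup J}\e{2b_i}\Bigr)\!\!\sum_{G\in\caF_I(J)}\prod_{\{i,j\}\in G}|\zeta_{ij}|
= \e{2b_{\iota(I)}}\sum_{K\subset J}\Bigl(\prod_{k\in K}|\zeta_{k,\iota(I)}|\Bigr) h(I'\cup K,J\setminus K),
\]
which is precisely the recursion \eqref{KS bound}. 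Combined with the base case and uniqueness this proves the lemma.

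I expect no serious obstacle: the only place where care is needed is verifying that the peel-off/glue-back maps are mutual inverses and that the edge weight factorizes cleanly — in particular that $\iota(I)$ has no other neighbors in $G$ besides those in $K\subset J$, which is automatic since the other roots $I'$ live in different components of the rooted forest. The prefactor $\e{2b_{\iota(I)}}$ in the recursion arises solely from splitting the global product over $I\cup J$, and does not interact with the combinatorial identity.
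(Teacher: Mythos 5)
Your proof is correct and follows essentially the same route as the paper: both verify that the explicit formula satisfies the recursion \eqref{KS bound} (whose solution is unique), by decomposing a rooted forest in $\caF_I(J)$ according to the set $K\subset J$ of neighbours of the distinguished root $\iota(I)$ and factoring out the weight $\e{2b_{\iota(I)}}$. You simply spell out the peel-off/glue-back bijection and the base case in more detail than the paper does.
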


\begin{proof}
Since the solution to Eq.\ \eqref{KS bound} is unique, it is enough to check that the Ansatz of the lemma satisfies the equation. First, let us observe that both sides are multiplied by $\prod_{i \in I \cup J} \e{2b_i}$. Thus it is enough to consider the case $b_i \equiv 0$.

The sum over graphs in $\caF_I(J)$ can be realized by first summing over the set $K$ of indices (necessarily in $J$) that are connected to $\iota(I)$; then over sets of trees in $J \setminus K$, and over connections to $I' \cup K$. Explicitly,
\be
\sum_{G \in \caF_I(J)} \prod_{\{i,j\}} |\zeta_{ij}| = \sum_{K \subset J} \Bigl( \prod_{i \in K} |\zeta_{i,\iota(I)}| \Bigr) \sum_{G \in \caF_{I' \cup K}(J \setminus K)} \prod_{\{i,j\} \in G} |\zeta_{ij}|.
\ee
This equation is precisely \eqref{KS bound}.
\end{proof}

\begin{proof}[Proof of Proposition \ref{prop tree estimate} (a)]
When $I$ has a single element, the function $g$ is equal to
\be
g \bigl( \{1\},\{2,\dots,n\} \bigr) = \sum_{G \in \caC_n} \prod_{\{i,j\} \in G} \zeta_{ij}.
\ee
This is the left side of Proposition \ref{prop tree estimate} (a). We have $\caF_{\{1\}}(\{2,\dots,n\}) = \caT_n$, the set of trees with $n$ vertices. Thus $h(\{1\},\{2,\dots,n\})$ is equal to the right side of Proposition \ref{prop tree estimate} (a), and the proof follows from Eq.\ \eqref{I bounded by J}.
\end{proof}

We now turn to the proof of Proposition \ref{prop tree estimate} (b). Notice that in absence of ``hard cores", i.e.\ when $\zeta_{ij} \neq -1$, our claim is just a reformulation of Corollary 3.2 (a) of \cite{Bry}. The present proof follows \cite{Pro}.

\begin{proof}[Proof of Proposition \ref{prop tree estimate} (b)]
Let $P$ be the set of $\{i,j\}$ such that $\zeta_{ij}=-1$, i.e.\ such that $\Re u_{ij} = \infty$. We regularize those numbers by setting
\be
v_{ij}^{(m)} = \begin{cases} m & \text{if } \{i,j\} \in P, \\ u_{ij} & \text{if } \{i,j\} \notin P. \end{cases}
\ee
This allows to use the tree identity of \cite{BrF,BaF,Bry}; we will eventually take $m$ to infinity. The tree identity can be written
\be
\label{tree identity}
\sum_{G \in \caC_n} \prod_{\{i,j\} \in G} \bigl( \e{-v_{ij}^{(m)}} - 1 \bigr) = \sum_{G \in \caT_n} \prod_{\{i,j\} \in G} (-v_{ij}^{(m)}) \int\dd\lambda_G(\{s_{ij}\}) \e{-\sum_{i<j} s_{ij} v_{ij}^{(m)}}.
\ee
The full definition of the measure $\lambda_G$ can be found in \cite{Bry}; here we only mention its relevant properties. $\lambda_G$ depends on the tree $G$ and it is a probability measure supported on a subset $L \subset \{ s_{ij} \in [0,1] : 1 \leq i<j \leq n \}$. If $v_{ij}^{(m)}$ satisfies the stability condition, then
\be
\sum_{i,j \in I, i<j} s_{ij} \Re u_{ij} \geq -\sum_{i \in I} b_i,
\ee
for all $\{s_{ij}\} \in L$ and all $I \subset \{1,\dots,n\}$.

If $\Re u_{ij} \neq \infty$ for any $i,j$, we immediately get the claim by using the stability condition above, since $\int\dd\lambda_G(\{s_{ij}\}) = 1$. The extension to possibly infinite numbers can be obtained using a trick due to Procacci \cite{Pro}. Let $G$ be a fixed tree and $\varepsilon>0$. We introduce
\be
w_{ij}^{(m,\varepsilon)} = \begin{cases} (1-\varepsilon)m & \text{if } \{i,j\} \in G \cap P, \\ \varepsilon & \text{if } \{i,j\} \in G \setminus P, \\ 0 & \text{otherwise.} \end{cases}
\ee
Then
\be
\sum_{1\leq i<j\leq n} s_{ij} \Re v_{ij}^{(m)} \geq \sum_{1\leq i<j\leq n} s_{ij} \Re v_{ij}^{(\varepsilon m)} + \sum_{1\leq i<j\leq n} s_{ij} w_{ij}^{(m,\varepsilon)} - \varepsilon |G \setminus P|.
\ee
If $m$ is large enough (depending on $\varepsilon$), $\{ v_{ij}^{(\varepsilon m)} \}$ is stable and the first term of the right side is bounded below by $-\sum b_i$. Then
\bm
\biggl| \prod_{\{i,j\} \in G} (-v_{ij}^{(m)}) \int\dd\lambda_G(\{s_{ij}\}) \e{-\sum s_{ij} v_{ij}^{(m)}} \biggr| \leq \biggl( \prod_{i=1}^n \e{b_i} \biggr) \biggl( \prod_{\{i,j\} \in G \cap P} \frac{w_{ij}^{(m,\varepsilon)}}{1-\varepsilon} \biggr) \\
\times \biggl( \prod_{\{i,j\} \in G \setminus P} \frac{w_{ij}^{(m,\varepsilon)} \e\varepsilon}\varepsilon |u_{ij}| \biggr) \int\dd\lambda_G(\{s_{ij}\}) \e{-\sum_{i<j} s_{ij} w_{ij}^{(m,\varepsilon)}}.
\end{multline}
A special case of the tree identity \eqref{tree identity} is
\be
\begin{split}
\prod_{\{i,j\} \in G} w_{ij}^{(m,\varepsilon)} \int\dd\lambda_G(\{s_{ij}\}) &\e{-\sum_{i<j} s_{ij} w_{ij}^{(m,\varepsilon)}} = \prod_{\{i,j\} \in G} \bigl( 1 - \e{-w_{ij}^{(m,\varepsilon)}} \bigr) \\
&= (1 - \e{-(1-\varepsilon)m})^{|G \cap P|} (1 - \e{-\varepsilon})^{|G \setminus P|}.
\end{split}
\ee
We get
\bm
\biggl| \prod_{\{i,j\} \in G} (-v_{ij}^{(m)}) \int\dd\lambda_G(\{s_{ij}\}) \e{-\sum s_{ij} v_{ij}^{(m)}} \biggr|
\\ \leq \biggl( \prod_{i=1}^n \e{b_i} \biggr) \biggl( \frac{1 - \e{-(1-\varepsilon)m}}{1-\varepsilon} \biggr)^{|G \cap P|} \biggl( \prod_{\{i,j\} \in G \setminus P} |u_{ij}| \frac{\e\varepsilon - 1}\varepsilon \biggr).
\end{multline}
We can insert this estimate into Eq.\ \eqref{tree identity}. Letting $m\to\infty$ and then $\varepsilon\to0$, we get Proposition \ref{prop tree estimate} (b).
\end{proof}

\section{Proofs of the theorems}
\label{sec proofs}

In this section we prove the theorems of Section \ref{sec clexp}. We consider only the case where Assumption \ref{ass KP crit} holds true --- the case with Assumption \ref{ass tree crit} is entirely the same, one only needs to replace all $|\zeta(x,y)|$ with $|u(x,y)|$ and all $\e{2b(\cdot)}$ with $\e{b(\cdot)}$. The proofs are based on the following tree estimate, which is a direct consequence of Proposition \ref{prop tree estimate}: for almost all $x_1,\dots,x_n \in \bbX$,
\be
\label{tree estimate}
|\varphi(x_1,\dots,x_n)| \leq \frac1{n!} \prod_{i=1}^n \e{2b(x_i)} \sum_{G \in \caT_n} \prod_{\{i,j\} \in G} |\zeta(x_i,x_j)|.
\ee

\begin{proof}[Proof of Theorem \ref{thm clexp}]
We start by proving the bound \eqref{the bound}. Let us introduce
\begin{align}
&K_N(x_1) = \sum_{n=1}^N \frac1{(n-1)!} \int\dd|\mu|(x_2) \dots \int\dd|\mu|(x_n) \prod_{i=1}^n \e{2b(x_i)} \sum_{G \in \caT_n} \prod_{\{i,j\} \in G} |\zeta(x_i,x_j)|, \nn\\
&K(x) = \lim_{N\to\infty} K_N(x).
\label{def K_N}
\end{align}
(The term $n=1$ is equal to $\e{2b(x_1)}$ by definition.) We show by induction that
\be
\label{bound K_N}
K_N(x) \leq \e{a(x)+2b(x)}
\ee
for any $N$. Then $K(x) \leq \e{a(x)+2b(x)}$ for almost all $x$, and using \eqref{tree estimate} we get \eqref{the bound}.

The case $N=1$ reduces to $1 \leq \e{a(x)}$ and it is clear. The sum over trees with $n$ vertices can be written as a sum over forests on $\{2,\dots,n\}$, and a sum over edges between 1 and each tree of the forest. Explicitly,
\bm
K_N(x_1) = \sum_{n=1}^N \frac1{(n-1)!} \sum_{m\geq1} \sumtwo{\{V_1,\dots,V_m\}}{\text{partition of } \{2,\dots,n\}} \int\dd|\mu|(x_2) \dots \int\dd|\mu|(x_n) \\
\prod_{i=1}^n \e{2b(x_i)} \prod_{k=1}^m \biggl( \sum_{\ell \in V_k} |\zeta(x_1,x_\ell)| \sum_{G \in \caT(V_k)} \prod_{\{i,j\} \in G} |\zeta(x_i,x_j)| \biggr).
\end{multline}
Here, $\caT(V)$ denotes the set of trees with $V$ as the set of vertices. If $|V_k|=1$ the sum over $G \in \caT(V_k)$ is one by definition. The term after the sum over partitions depends on the cardinalities of the $V_k$'s, but not on the actual labeling. Also, each $\ell \in V_k$ gives the same contribution. We get
\bm
K_N(x_1) = \e{2b(x_1)} \sum_{n=1}^N \sum_{m\geq1} \frac1{m!} \sumtwo{n_1,\dots,n_m \geq 1}{n_1 + \dots + n_m = n-1} \prod_{k=1}^m \biggl( \frac1{(n_k-1)!} \\
\int\dd|\mu|(y_1) \dots \int\dd|\mu|(y_{n_k}) \, |\zeta(x_1,y_1)| \prod_{i=1}^{n_k} \e{2b(y_i)} \sum_{G \in \caT_{n_k}} \prod_{\{i,j\} \in G} |\zeta(y_i,y_j)| \biggr)
\end{multline}
We obtain an upper bound by releasing the constraint $n_1 + \dots n_m \leq N-1$ to $n_k \leq N-1$, $1 \leq k \leq m$. We then get
\be
\begin{split}
K_N(x_1) &\leq \e{2b(x_1)} \exp\biggl\{ \sum_{n=1}^{N-1} \frac1{(n-1)!} \int\dd|\mu(y_1) \dots \int\dd|\mu|(y_n) \, |\zeta(x_1,y_1)| \\
&\hspace{5cm} \prod_{i=1}^n \e{2b(y_i)} \sum_{G \in \caT_n} \prod_{\{i,j\} \in G} |\zeta(y_i,y_j)| \biggr\} \\
&= \e{2b(x_1)} \exp\biggl\{ \int\dd|\mu|(y_1) \, |\zeta(x_1,y_1)| K_{N-1}(y_1) \biggr\}.
\end{split}
\ee
We have $K_{N-1}(y_1) \leq \e{a(y_1)+2b(y_1)}$ by the induction hypothesis. Eq.\ \eqref{bound K_N} follows from Assumption \ref{ass KP crit}.

The rest of the proof is standard combinatorics. The partition function can be expanded so as to recognize the exponential of connected graphs. Namely, we start with
\be
Z = 1 + \sum_{n\geq1} \frac1{n!} \int\dd\mu(x_1) \dots \int\dd\mu(x_n) \sum_{G \in \caG_n} \prod_{\{i,j\} \in G} \zeta(x_i,x_j).
\ee
The graph $G \in \caG_n$ can be decomposed into $k$ connected graphs whose sets of vertices form a partition of $\{1,\dots,n\}$. Summing first over the number $m_i$ of vertices for each set of the partition, we get
\be
\begin{split}
Z &= 1 + \sum_{n\geq1} \sum_{k\geq1} \frac1{k!} \sumtwo{m_1,\dots,m_k \geq 1}{m_1+\dots+m_k=n} \frac1{m_1! \dots m_k!} \\
&\hspace{35mm} \prod_{\ell=1}^k \Bigl\{ \int\dd\mu(x_1) \dots \int\dd\mu(x_{m_\ell}) \sum_{G \in \caC_{m_\ell}} \prod_{\{i,j\} \in G} \zeta(x_i,x_j) \Bigr\} \\
&= 1 + \sum_{n\geq1} \sum_{k\geq1} \frac1{k!} \sumtwo{m_1,\dots,m_k \geq 1}{m_1+\dots+m_k=n}  \prod_{\ell=1}^k \Bigl\{ \int\dd\mu(x_1) \dots \int\dd\mu(x_{m_\ell}) \varphi(x_1,\dots,x_{m_\ell}) \Bigr\}.
\end{split}
\ee
The triple sum is absolutely convergent thanks to the estimate \eqref{the bound} that we have just established. One can then interchange the sums by the dominated convergence theorem. This removes the sum over $n$, and this completes the proof of Theorem \ref{thm clexp}.
\end{proof}

Next we prove Theorems \ref{thm correlation functions} and \ref{thm decay correlation functions} in reverse order, since we will use the convergence properties in the latter theorem to get the former.

\begin{proof}[Proof of Theorem \ref{thm decay correlation functions}]
From the definition \eqref{def Z2 hat} and the tree estimate \ref{tree estimate}, we have
\be
|\hat Z(x_1,x_2)| \leq \sum_{n\geq2} \frac1{(n-2)!} \int\dd|\mu|(x_3) \dots \int\dd|\mu|(x_n) \prod_{i=1}^n \e{2b(x_i)} \sum_{G \in \caT_n} \prod_{\{i,j\}} |\zeta(x_i,x_j)|.
\ee
The expression above involves a sum over trees of arbitrary size that connect 1 and 2. Any such tree decomposes into a line of $m+1$ edges that connect 1 and 2 ($m\geq0$), and $m+2$ trees rooted in the vertices of the connecting line. Taking into account the combinatorial factors, we obtain
\bm
|\hat Z(x,y)| \leq |\zeta(x,y)| K(x) K(y) \\
+ \sum_{m\geq1} \int\dd|\mu|(x_1) \dots \int\dd|\mu|(x_m) \biggl( \prod_{i=0}^m |\zeta(x_i,x_{i+1})| K(x_i) \biggr) K(y) \nn
\end{multline}
with $x_0 \equiv x$ and $x_{m+1} \equiv y$. The result follows from the bound \eqref{bound K_N} for $K$.
\end{proof}

\begin{proof}[Proof of Theorem \ref{thm correlation functions}]
It is actually similar to the end of the proof of Theorem \ref{thm clexp}. $Z(x)$ can be expanded as a sum over graphs, that can be decomposed into a connected graph that contains 1, and other connected graphs. Taking into account the combinatorial factors, the contribution of connected graphs containing 1 yields $\hat Z(x)$, and the contribution of the others yields the expression \eqref{clexp} for $Z$. Thus $Z(x) = \hat Z(x) \, Z$. One step involved interchanging unbounded sums, which is justified because everything is absolutely convergent, thanks to \eqref{the bound} and Theorem \ref{thm decay correlation functions}.

In the graph expansion for $Z(x,y)$, the terms where 1 and 2 belong to the same connected graph yield $\hat Z(x,y) \, Z$, and the terms where 1 and 2 belong to different connected graphs yield $\hat Z(x) \hat Z(y) Z$. The detailed argument is the same as above. We then obtained the desired expression.
\end{proof}

\medskip
{\footnotesize
{\bf Acknowledgments:} We thank the referee for helpful comments. D.U. is grateful to Bill Faris, Roberto Fern\'andez, Roman Koteck\'y, Charles-\'Edouard Pfister, Alan Sokal, and Milo\v s Zahradn\'\i k for many useful discussions. He also acknowledges the hospitality of the Armenian Academy of Science, CNRS Marseille, the University of Geneva, ETH Z\"urich, the Center of Theoretical Studies of Prague, and the University of Arizona, where parts of this project were carried forward. D.U. is supported in part by the grant DMS-0601075 of the US National Science Foundation.
}

\end{document}